\documentclass[fleqn]{article}
\usepackage{multicol}
\usepackage{graphicx} 
\usepackage{subfig}
\usepackage{geometry}
    \geometry{margin=2cm}
\usepackage[utf8]{inputenc}
\usepackage[T1]{fontenc}
\usepackage{setspace}
\usepackage{amsmath}
\usepackage{bbm}
\usepackage{float}
\usepackage{microtype}
\usepackage{amssymb}
\usepackage{amsthm}
\newtheorem{mydef}{Definition}
\usepackage{tikz}
\usepackage{mathtools}
\usepackage{enumerate}
\usepackage{authblk}
\usepackage{hyperref}
\usepackage{xcolor}
\usepackage[
    backend=biber,
    style=phys]{biblatex} 
\addbibresource{references.bib} 

\newcommand{\denop}{\mathcal{D}}
\newcommand{\hs}{\mathcal{H}}

\newcommand{\id}{\mathbbm{1}}

\DeclareMathOperator{\Tr}{Tr}
\newtheorem{theorem}{Theorem}

\newtheorem{prop}[theorem]{Proposition}
\newtheorem{algorithm}[theorem]{Algorithm}

\title{Computation of the Smooth Max-Mutual Information via Semidefinite Programming}
\author[1]{Christopher Popp}
\author[2]{Tobias C. Sutter}
\author[3]{Beatrix C. Hiesmayr}
\affil[1,2,3]{University of Vienna, Faculty of Physics, Währingerstrasse 17, 1090 Vienna.\vspace{3.5mm}}
\affil[1]{christopher.popp@univie.ac.at}
\affil[2]{tobias.christoph.sutter@univie.ac.at}
\affil[3]{beatrix.hiesmayr@univie.ac.at}

\date{}

\begin{document}

\onehalfspacing
\maketitle

\begin{abstract}
We present an iterative algorithm based on semidefinite programming (SDP) for computing the quantum smooth max-mutual information $I^\varepsilon_{\max}(\rho_{AB})$ of bipartite quantum states in any dimension. The algorithm is accurate if a rank condition for marginal states within the smoothing environment is satisfied and provides an upper bound otherwise. Central to our method is a novel SDP, for which we establish primal and dual formulations and prove strong duality. With the direct application of bounding the one-shot distillable key of a quantum state, this contribution extends SDP-based techniques in quantum information theory. Thereby it improves the capabilities to compute or estimate information measures with application to various quantum information processing tasks.
\end{abstract}

\section{Introduction}
Quantum information theory has extended our understanding of information processing, communication, and cryptography by leveraging the principles of quantum mechanics \cite{wilde_quantum_2013}. Central to this field are entropic quantities that quantify correlations, uncertainties, and operational capabilities of quantum systems (cf. \cite{khatri_principles_2024} for a comprehensive overview). In particular, considering small deviations from the quantum states at hand, smoothed versions of entropic and information measures, such as the smooth max-mutual information $I^\varepsilon_{max}(\rho_{AB})$ \cite{buscemi_quantum_2010, datta_min-_2009} for bipartite quantum states $\rho_{AB}$ are essential for characterizing one-shot operational tasks (cf. Ref.~\cite{tomamichel_quantum_2016}) in noisy environments with finite resources.
\\
The smooth max-mutual information captures the correlation between subsystems $A$ and $B$, accounting for small errors due to noise or approximations due to limited resources. This quantity is particularly relevant in cryptographic settings, where it can be combined with other measures like the hypothesis testing mutual information \cite{hiai_proper_1991} to bound the one-shot distillable key \cite{khatri_second-order_2021}. This provides practical estimates for secure key rates achievable from quantum states \cite{khatri_principles_2024}. However, computing $I^\varepsilon_{max}$ exactly remains a computationally challenging optimization problem.
Existing methods in quantum information theory have increasingly relied on semidefinite programming (SDP) to compute, approximate, or bound entropic quantities efficiently \cite{skrzypczyk_semidefinite_2023,  tavakoli_semidefinite_2024,watrous_semidefinite_2009, watrous_simpler_2012, dupuis_generalized_2012}. For instance, SDP formulations have been successfully applied to compute the hypothesis testing mutual information and conditional smooth min- and max-entropies \cite{nuradha_fidelity-based_2024}. 
\\
Despite these advances, an SDP-based approach for the smooth max-mutual information is lacking.
In Ref.~\cite{nuradha_fidelity-based_2024}, the authors recently used a so-called mountain-climbing-type algorithm \cite{konno_cutting_1976} to use SDPs iteratively to calculate the smooth min-entropy, an information quantity not allowing a direct formulation as an SDP. In this work, we employ a similar approach by introducing an SDP-based iterative algorithm for computing $I^\varepsilon_{max}(\rho_{AB})$. Crucially, the quantity considered in Ref.~\cite{nuradha_fidelity-based_2024} and $I^\varepsilon_{max}$ differ in their smoothing structure, requiring an alternative approach, including a novel SDP at the core of the algorithm. We characterize this SDP by its prime-dual formulations, show strong duality, and prove that the iterative algorithm computes the smooth max-mutual information accurately, assuming a rank constraint regarding the marginal states in the smoothing neighborhood.
\\
Section \ref{sec:background} establishes the notation and the information quantities and SDP methods applied in this work. In Section \ref{sec:sptapprox}, we present an SDP-based algorithm to compute the smooth max-mutual information. We characterize the underlying SDP by its primal and dual form and prove the accuracy of the algorithm under the assumption of a rank constraint. Finally, we conclude our results in Section \ref{sec:conclusion}. 
\section{Background} \label{sec:background}
\subsection{Notation} \label{sec:notation}
Consider the bipartite Hilbert space $\hs = \hs_A \otimes \hs_B$ of two parties, $A$ and $B$. The set of linear operators on $\hs$ is denoted by $L(\hs)$. The set of positive semidefinite operators $X\geq0$ is denoted as $L_+(\hs)$ and the set of density operators, or quantum states, satisfying $\rho \in L_+(\hs)$ and $\Tr(\rho)=1$ is written as $\denop(\hs)$. The marginal states of a multipartite state are denoted as $\rho_{A/B}:=\Tr_{B/A}[\rho_{AB}]$. If $X$ is positive definite, we denote it by $X > 0$. A superoperator $\Phi$ is a linear operator mapping $L(\hs) \rightarrow L(\hs)$. $\Phi$ is hermiticity preserving, if $\Phi(X)$ is hermitian for every hermitian $X$. The adjoint superoperator $\Phi^\dagger$ is defined by satisfying $ \Tr[\Phi(X)Y] = \Tr[X\Phi^\dagger(Y)]$ for all $X,Y \in L(\hs)$.
\subsection{The max-relative entropy and the smooth max-mutual information}
The information measure analyzed in this work involves an optimization over an environment around the quantum state of interest, called smoothing. For that purpose, we use the so-called sine distance \cite{rastegin_sine_2006} based on the fidelity \cite{uhlmann_transition_1976}:
\begin{mydef}[Fidelity, sine distance]\label{def:fidelitySineDistance} \ \\
 The fidelity of two quantum states $\rho$ and $\sigma$ is defined as follows:
 \begin{flalign}
     \mathcal{F}(\rho, \sigma) := \left(\Tr[\sqrt{\sqrt{\sigma} \rho \sqrt{\sigma}}]\right)^2
 \end{flalign}
 The sine distance is defined as:
 \begin{flalign}
     P(\rho,\sigma) := \sqrt{1-\mathcal{F}(\rho,\sigma)}     
 \end{flalign}
\end{mydef} \noindent
Using this distance measure, the smoothing environment is defined as a ball around the quantum state for which the information measure is to be evaluated.
\begin{mydef}[Smoothing environment] \label{def:smoothinBall} \ \\
Using the sine distance, the smoothing environment $B^{\varepsilon}(\rho)$ around a state $\rho$ is defined as
    \begin{flalign}
        \label{eq:smoothBall}
        B^\varepsilon(\rho) := \left\{ \tilde{\rho} \in \denop(\hs): P(\rho,\tilde{\rho}) \leq \varepsilon \right\} 
        = \left\{ \tilde{\rho} : \mathcal{F}(\rho, \tilde{\rho}) \geq 1-\varepsilon^2 \right\}.
    \end{flalign}
\end{mydef} \noindent
The quantity of interest in this work is related to the max-relative entropy $D_{max}$ \cite{datta_min-_2009}. Here, we analyze a smoothed generalized mutual information $I^\varepsilon_{max}$ \cite{buscemi_quantum_2010} based on $D_{max}$, which can be used for quantifying the capability of a quantum state to be used for the task of secret-key distillation (cf., e.g., \cite{khatri_second-order_2021,khatri_principles_2024}).
\begin{mydef}[Max-relative entropy, smooth max-mutual information] 
\label{def:maxrelEntInf}
    \begin{flalign}
        \label{eq:smoothMaxMutualInf}
        D_{max}(\rho||\sigma) &:= \begin{cases}
            \log_2 \left\| \sigma^{-\frac{1}{2}} \rho \sigma^{-\frac{1}{2}}  \right\|_{\infty} ~~~\text{ if supp}(\rho) \subseteq \text{supp}(\sigma) \\
            + \infty ~~~~~~~~~~~~~~~~~~~~~ \text{ else}
        \end{cases} \\
        I_{max}^\varepsilon(\rho_{AB}) &:= \inf_{\tilde{\rho}_{AB} \in B^\varepsilon(\rho_{AB})} D_{max}(\tilde{\rho}_{AB} || \rho_A \otimes \tilde{\rho}_B)
    \end{flalign}
    where $B^\varepsilon(\rho_{AB})$ is as in Definition~\ref{def:smoothinBall}, $|| \cdot||_\infty$ is the spectral norm and $\tilde{\rho}_B := \Tr_A[\tilde{\rho}_{AB}]$.
\end{mydef}
\subsection{Semidefinite Programming}
Semidefinite programs are constrained optimization problems with positive semidefinite optimization variables that often occur in quantum information theory.
\begin{mydef}[Semidefinite program] \label{def:sdp}\ \\
    Let $\Phi$ be a hermiticity preserving superoperator and $A$ and $B$ be hermitian operators. A semidefinite program (SDP) corresponds to the following two optimization problems over positive semidefinite operators $X,Y$. \\
    The primal SDP:
    \begin{flalign} \label{eq:primal}
        &\text{maximize } \Tr[AX] \\
        &\text{subject to }~ \Phi(X) \leq B, ~ X\geq0. \notag
    \end{flalign}
    The dual SDP:
    \begin{flalign} \label{eq:dual}
        &\text{minimize } \Tr[BY] \\
        &\text{subject to }~ \Phi^\dagger(Y) \geq A, ~ Y\geq0, \notag
    \end{flalign}
    where $\Phi^\dagger$ is the adjoint superoperator of $\Phi$.
\end{mydef} \noindent
A variable $X$ satisfying the constraints of the primal SDP is called a feasible point. If $X$ satisfies $\Phi(X)<B$ and $X > 0$, it is called a strictly feasible point. Similarly, $Y$ is feasible if it satisfies the constraints of the dual SDP and strictly feasible if the constraints are strictly satisfied. An SDP is said to satisfy the strong duality property if the optimal values of \eqref{eq:primal} and \eqref{eq:dual} of the primal and dual problems are equal.
\section{Computing the smooth max-mutual information with an SDP-based algorithm}
\label{sec:sptapprox}
The max-relative entropy $D_{max}(\rho||\sigma)$ of two quantum states $\rho$ and $\sigma$ can be characterized by the following optimization problem (cf. Ref.~\cite{khatri_principles_2024}):
\begin{flalign}
    \label{eq:opt_dmax}
    D_{max}(\rho||\sigma) = \log_2 \inf_{\lambda \geq 0} \lbrace \lambda:  \lambda \sigma - \rho \geq 0 \rbrace.
\end{flalign}
The inner optimization can be formulated as an SDP in its dual form by setting $Y\equiv\lambda,~A=\rho, ~B = 1, ~\Phi^\dagger(Y)=Y\sigma$ in Definition \ref{def:sdp}. The same does not hold for the smooth max-mutual information $I^\varepsilon_{max}(\rho_{AB})$. This can be seen by writing $I_{max}^\varepsilon$ using the characterization for the max-relative entropy \eqref{eq:opt_dmax}, implying
\begin{flalign}
    \label{eq:smoothMaxMutAlt}
    I_{max}^\varepsilon(\rho_{AB}) := \log_2 \left( \inf_{\tilde{\rho}_{AB} \in B^\varepsilon(\rho_{AB})} 
    \inf_{\lambda \geq 0} \left\{
        \lambda: \lambda \rho_A \otimes \tilde{\rho}_B - \tilde{\rho}_{AB} \geq 0
    \right\}
    \right) := \log_2(\lambda_{min}).
\end{flalign}
As the constraint involves a bilinear term in the variables $\tilde{\rho}_{AB}$ and $\lambda$, this cannot be directly solved by a SDP. We propose an iterative approach applicable for states $\rho_{AB}$ and $\varepsilon > 0$ for which $B^\varepsilon(\rho_{AB})$ contains only full-rank marginal states $\rho_A := \Tr_B[\rho_{AB}]$ and $\tilde{\rho}_B:=\Tr_A[\tilde{\rho}_{AB}]$. Alternately fixing one of the variables and solving an SDP for the other, this seesaw or mountain-climbing algorithm converges to the optimal solution after a finite amount of iterations \cite{konno_cutting_1976}, providing an upper bound approximating $I^{\varepsilon}_{max}(\rho_{AB})$.
Define the $i$-th iteration of the procedure as follows:
\begin{algorithm} \label{thm:algorithm}\ \\
\begin{enumerate}
    \item If $i=1$, choose $\rho_{AB}^i = \rho_{AB}$.
    \item Solve the following SDP, with the solution defining 
    \begin{flalign}
    \label{eq:sdpStep2}
        \lambda^i := \inf_{\lambda \geq 0} 
        \left\{ 
            \lambda : \lambda \rho_A \otimes \rho_B^i - \rho_{AB}^i \geq 0
        \right\}.
    \end{flalign}
    \item Solve the following SDP, with the solution defining
    \begin{flalign}
        \label{eq:sdpStep3}
        \mu^i := \sup_{\substack{\mu \geq 0, \\ \tilde{\rho}_{AB}\in B^\varepsilon(\rho_{AB})}}
        \left\{
            \mu: \lambda^i \rho_A \otimes \tilde{\rho}_B - \tilde{\rho}_{AB} - \mu \mathbbm{1} 
            \geq 0
        \right\}
    \end{flalign}
    \item If $\mu^i = 0$ stop and set $\lambda_{min} = \lambda^i$. Else define $\rho_{AB}^{i+1}$ to be state $\tilde{\rho}_{AB} \in B^\varepsilon(\rho_{AB})$ maximizing the expression in step 3 and use it in step 2 for the next iteration.
\end{enumerate}
\end{algorithm} \noindent
Note that for all iterations, $B^\varepsilon(\rho_{AB})$ remains fixed around $\rho_{AB}$. We now characterize the SDP of \eqref{eq:sdpStep3} by its primal and dual form and show strong duality.
\begin{prop}
    \label{thm:strongduality}
    The primal and dual SDPs of 
    \begin{flalign}
    \label{eq:sdpStep2Prop}
    \sup_{\substack{\mu \geq 0, \\ \tilde{\rho}_{AB}\in B^\varepsilon(\rho_{AB})}}
        \left\{
            \mu: \lambda^i \rho_A \otimes \tilde{\rho}_B - \tilde{\rho}_{AB} - \mu \mathbbm{1} 
            \geq 0
        \right\}
    \end{flalign}
    can be written in the following forms:
    \begin{flalign}
        &\sup_{\substack{\mu \geq 0, \\ \tilde{\rho}_{AB}\in \denop(\hs), \\ X \in L(\hs)}}
    \left\{
        \mu : \lambda \rho_A \otimes \tilde{\rho}_B - \tilde{\rho}_{AB} - \mu \mathbbm{1} \geq 0,
        ~\mathrm{Re} \lbrace \Tr[X]\rbrace \geq \sqrt{1-\varepsilon^2},  
        ~\begin{pmatrix}
            \rho_{AB} & X \\
            X^\dagger & \tilde{\rho}_{AB}
        \end{pmatrix}
        \geq 0
    \right\} ~~~~~~~~\text{(primal)}, \\
    &        \inf_{\substack{W, D \geq 0\\ \eta \in \mathbbm{R}, \nu \geq 0}} \left\{ 
        \eta - 2\nu \sqrt{1-\varepsilon^2} + \Tr[\rho D]: ~
        \Tr[W] \geq 1,     ~
         \begin{pmatrix}  D & \nu\mathbbm{1} \\ 
         \nu \mathbbm{1} & \eta \mathbbm{1}- (\lambda M -W)\end{pmatrix} \geq 0 
    \right\} ~~~~~~~~~~~~~~~~~\text{(dual)},
    \end{flalign}
    and strong duality holds.
\end{prop}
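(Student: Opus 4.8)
The plan is to establish the primal form first from the semidefinite representation of the fidelity, then obtain the dual by the standard Lagrangian recipe of Definition~\ref{def:sdp}, and finally invoke a Slater-type condition for strong duality.

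First I would rewrite the smoothing constraint $\tilde{\rho}_{AB}\in B^\varepsilon(\rho_{AB})$ using the semidefinite representation of the (root) fidelity. Since $\mathcal{F}(\rho_{AB},\tilde{\rho}_{AB})\geq 1-\varepsilon^2$ is equivalent to $\sqrt{\mathcal{F}(\rho_{AB},\tilde{\rho}_{AB})}\geq\sqrt{1-\varepsilon^2}$, and the root fidelity admits the well-known characterization
\[
\sqrt{\mathcal{F}(\rho,\sigma)} = \sup_{X \in L(\hs)} \left\{ \mathrm{Re}\,\Tr[X] : \begin{pmatrix} \rho & X \\ X^\dagger & \sigma \end{pmatrix} \geq 0 \right\},
\]
membership in the ball is exactly the existence of an $X\in L(\hs)$ with $\mathrm{Re}\,\Tr[X]\geq\sqrt{1-\varepsilon^2}$ and the displayed block operator positive semidefinite. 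Substituting this into \eqref{eq:sdpStep2Prop}, with $\rho=\rho_{AB}$ and $\sigma=\tilde{\rho}_{AB}$, and keeping $\tilde\rho_{AB}\in\denop(\hs)$ so that the normalization $\Tr[\tilde\rho_{AB}]=1$ is retained, reproduces the stated primal verbatim.

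To derive the dual I would attach Lagrange multipliers to the three constraints: a positive semidefinite $W\geq0$ to the operator inequality $\lambda\rho_A\otimes\tilde\rho_B-\tilde\rho_{AB}-\mu\mathbbm{1}\geq0$, a scalar $\nu\geq0$ to the affine constraint $\mathrm{Re}\,\Tr[X]\geq\sqrt{1-\varepsilon^2}$, a positive semidefinite block operator with upper-left block $D$ to the block inequality, and a free scalar $\eta$ to the normalization $\Tr[\tilde\rho_{AB}]=1$. Forming the Lagrangian and taking the supremum over the primal variables produces the dual constraints one by one: finiteness over $\mu\geq0$ forces $\Tr[W]\geq1$, since the coefficient of $\mu$ is $1-\Tr[W]$; stationarity in the unconstrained $X$ forces the off-diagonal dual block to be proportional to $\nu\mathbbm{1}$, which after absorbing a factor of two yields both the off-diagonal block $\nu\mathbbm{1}$ and the objective term $-2\nu\sqrt{1-\varepsilon^2}$; and boundedness over $\tilde\rho_{AB}\geq0$ fixes the remaining diagonal block to $\eta\mathbbm{1}-(\lambda M-W)$. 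Here $M$ is the image of $W$ under the adjoint of the marginal map $\sigma\mapsto\rho_A\otimes\Tr_A[\sigma]$: using $\Tr[W(\rho_A\otimes\Tr_A[\tilde\rho_{AB}])]=\Tr[(\mathbbm{1}_A\otimes\Tr_A[(\rho_A\otimes\mathbbm{1}_B)W])\,\tilde\rho_{AB}]$ one reads off $M=\mathbbm{1}_A\otimes\Tr_A[(\rho_A\otimes\mathbbm{1}_B)W]$. Collecting the surviving constant terms gives the objective $\eta-2\nu\sqrt{1-\varepsilon^2}+\Tr[\rho_{AB}D]$ together with the stated block constraint.

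Strong duality I would obtain from the Slater-type theorem for semidefinite programs: if the primal is feasible and the dual is strictly feasible, the optimal values coincide (and the primal optimum is attained). Primal feasibility is immediate, since the current iterate $\rho^i_{AB}$ lies in $B^\varepsilon(\rho_{AB})$ and satisfies $\lambda^i\rho_A\otimes\rho^i_B-\rho^i_{AB}\geq0$ by the choice of $\lambda^i$ in step 2, so $(\mu,\tilde\rho_{AB})=(0,\rho^i_{AB})$ together with an admissible $X$ is feasible. For strict dual feasibility I would exhibit $W,D>0$ with $\Tr[W]>1$, a small $\nu>0$, and a large $\eta$: taking $\eta$ large makes the lower-right block $\eta\mathbbm{1}-(\lambda M-W)$ positive definite, and then choosing $\nu$ small enough renders the Schur complement $D-\nu^2(\eta\mathbbm{1}-(\lambda M-W))^{-1}$ positive definite, so the dual block constraint holds strictly. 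Such a point always exists, whence strong duality.

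The main obstacle is twofold. The computational care lies in the partial-trace bookkeeping that converts the bilinear term $\lambda\rho_A\otimes\tilde\rho_B$ into $\lambda M$ through the adjoint of the marginalization map, and in tracking the factor-of-two and sign conventions linking the off-diagonal block to $\nu$ (noting that the sign of the off-diagonal block is immaterial for positive semidefiniteness, so it may be normalized to $\nu\mathbbm{1}$). The conceptual care lies in the choice of Slater side: because the fixed block $\rho_{AB}$ need not be full rank, the primal block operator can never be positive definite, so the primal is in general \emph{not} strictly feasible, and strict feasibility must be verified on the dual as above.
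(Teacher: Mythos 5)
Your proposal is correct and follows essentially the same route as the paper: the primal via Watrous's semidefinite characterization of the root fidelity, the dual via the standard duality computation (your Lagrangian bookkeeping, including the identification $M=\mathbbm{1}_A\otimes\Tr_A[(\rho_A\otimes\mathbbm{1}_B)W]$ and the factor-of-two handling of $\nu$, matches the paper's explicit $\Phi$/$\Phi^\dagger$ standard-form calculation), and strong duality via Slater applied with a feasible primal point and a strictly feasible dual point. The only differences are cosmetic — you use one free multiplier $\eta$ where the paper uses $\mu_1-\mu_2$, and you argue strict dual feasibility abstractly via a Schur complement where the paper exhibits an explicit point — and your observation that strict feasibility must be checked on the dual side is exactly the paper's (implicit) reasoning.
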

\begin{proof}    
First note that by \eqref{eq:smoothBall}, the $\varepsilon$-ball based on the sine distance can be characterized by the root fidelity, for which we can use the characterization 
$
    \sqrt{\mathcal{F}}(\rho,\tilde{\rho}) := \sqrt{\mathcal{F}(\rho,\tilde{\rho})}= \sup_{X \in L(\hs)} 
        \left\{ \mathrm{Re} \lbrace \Tr[X] \rbrace:  
            \begin{pmatrix}
            \rho & X \\
            X^\dagger & \tilde{\rho}
            \end{pmatrix}
            \geq 0
 \right\}
$
\cite{watrous_simpler_2012} to write the optimization problem in \eqref{eq:sdpStep2Prop} as:
\begin{flalign}
    \label{eq:sdpStep2_1}
    \sup_{\substack{\mu \geq 0, \\ \tilde{\rho}_{AB}\in \denop(\hs, \\ X \in L(\hs)}}
    \left\{
        \mu : \lambda \rho_A \otimes \tilde{\rho}_B - \tilde{\rho}_{AB} - \mu \mathbbm{1} \geq 0,
        ~\mathrm{Re} \lbrace \Tr[X]\rbrace \geq \sqrt{1-\varepsilon^2},  
        ~\begin{pmatrix}
            \rho_{AB} & X \\
            X^\dagger & \tilde{\rho}_{AB}
        \end{pmatrix}
        \geq 0
    \right\}.
\end{flalign}
For the primal problem in the standard form $
    \sup_{Z \geq 0} \left\{ \Tr[AZ] : \Phi(Z) \leq B \right\},     
$ where $A$ and $B$ are hermitian operators with suitable block sizes so that all matrix products are well defined and $\Phi$ is a hermiticity-preserving superoperater, one finds:
\begin{flalign}
    Z &= \begin{pmatrix}
        \mu & 0 & 0 \\
        0 & D & X \\
        0 & X^\dagger & \tilde{\rho}_{AB}
    \end{pmatrix}, ~~~
    A = \begin{pmatrix}
        1 & 0 & 0 \\
        0 & 0 & 0 \\
        0 & 0 & 0
    \end{pmatrix}, ~~~
        B = \begin{pmatrix}
        0 & 0 & 0 & 0 & 0 \\
        0 & 1 & 0 & 0 & 0 \\
        0 & 0 & -1 & 0 & 0 \\
        0 & 0 & 0 & -\sqrt{1-\varepsilon^2} & 0 \\
        0 & 0 & 0 & 0 &  \begin{bmatrix} \rho_{AB} & 0 \\ 0 & 0  \end{bmatrix} \\
    \end{pmatrix}
    \\
    \Phi(Z) &=
    \begin{pmatrix}
        -(\lambda \rho_A \otimes \tilde{\rho}_B - \tilde{\rho}_{AB} - \mu \mathbbm{1}) & 0 & 0 & 0 & 0\\
        0 & \Tr[\tilde{\rho}_{AB}] & 0 & 0 & 0 \\
        0 & 0 & -\Tr[\tilde{\rho}_{AB}] & 0 & 0 \\
        0 & 0 & 0 & -Re\lbrace\Tr[X]\rbrace & 0\\
        0 & 0 & 0 & 0 & \begin{bmatrix} 0 & -X \\ -X^\dagger & -\tilde{\rho}_{AB} \end{bmatrix} \\
    \end{pmatrix}.
\end{flalign}
We use this to derive the dual problem and show strong duality. The dual problem is
$\inf_{Y \geq 0} \left\{ \Tr[BY] : \Phi^\dagger(Y) \geq A  \right\}$,
where $\Tr[\Phi(Z)Y] = \Tr[Z\Phi^\dagger(Y)]$ defines the adjoint superoperator. To derive $\Phi^\dagger$, set 
\begin{flalign}
    Y = \begin{pmatrix}
        W & 0 & 0 & 0 & 0 \\
        0 & \mu_1 & 0 & 0 & 0 \\
        0 & 0 & \mu_2 & 0 & 0  \\
        0 & 0 & 0 & \nu & 0  \\
        0 & 0 & 0 & 0 & \begin{bmatrix}  D_1 & V \\ V^\dagger & D_2 \end{bmatrix}  \\        
    \end{pmatrix},
\end{flalign}
with $W \in L_+(\hs), D_1, D_2, V \in L(\hs)$ such that $Y \geq 0$ and $\Tr[BY]$ is well-defined. We then have
\begin{flalign}
    \Tr[\Phi(Z)Y] &= \Tr\left[-(\lambda \rho_A \otimes \tilde{\rho}_B - \tilde{\rho}_{AB} - \mu\mathbbm{1})W \right] + \Tr[\tilde{\rho}_{AB}](\mu_1-\mu_2) - \mathrm{Re}\lbrace \Tr[X] \rbrace \nu \notag \\
   &~~~ - \Tr \left[ 
         \begin{pmatrix} 0 & X \\ X^\dagger & \tilde{\rho}_{AB} \end{pmatrix}
         \begin{pmatrix} D_1 & V \\ V^\dagger & D_2 \end{pmatrix}
    \right]  \\
    &= \mu \Tr[W] - \Tr[(\lambda \rho_A \otimes \tilde{\rho}_B-\tilde{\rho}_{AB})W] + \Tr[(\mu_1 - \mu_2)\tilde{\rho}_{AB}] -\mathrm{Re}\lbrace \Tr[X]\rbrace \nu \notag \\
    &~~~  - \Tr[XV^\dagger + X^\dagger V + \tilde{\rho}_{AB}D_2] \\
    &= \mu \Tr[W] + \Tr[\left\{  (\mu_1-\mu_2)\mathbbm{1} - D_2 - (\lambda M - W)  \right\}\tilde{\rho}_{AB}]
    - \mathrm{Re} \lbrace \Tr[X(\nu \mathbbm{1} + 2V^\dagger)]\rbrace.
\end{flalign}
In the last equation, we have defined the operator $M$, using an ancillary space $A'$ of the same size as $A$:
\begin{flalign}
    \label{eq:Mopdef}
    M := \mathbbm{1}_A \otimes \lbrace \Tr_{A'}[W_{A'B}(\rho_{A'} \otimes \mathbbm{1}_B)] \rbrace 
    \implies \Tr_{AB}[M \tilde{\rho}_{AB}] = \Tr_{A'B}[ (\rho_{A'}\otimes \tilde{\rho}_B) W].
\end{flalign}
This implies the following three relations:
\begin{enumerate}
\item
 \begin{flalign}
    &\Tr[\Phi(Z)Y] = \mu \Tr[W] 
    + \Tr \left[ 
        \begin{pmatrix}
            0 & \frac{\nu}{2}\mathbbm{1} +V \\
            \frac{\nu}{2}\mathbbm{1} +V^\dagger & (\mu_1-\mu_2)\mathbbm{1} - D_2 -(\lambda M-W)
        \end{pmatrix}
        \begin{pmatrix}
            D & X \\
           X^\dagger & \tilde{\rho}_{AB}
        \end{pmatrix}
    \right] \\
    &\implies \Phi^{\dagger}(Y) =  \begin{pmatrix}
        \Tr[W] & 0 \\
        0 &  \begin{bmatrix}
            0 & \frac{\nu}{2}\mathbbm{1} +V \\
            \frac{\nu}{2}\mathbbm{1} +V^\dagger & (\mu_1-\mu_2)\mathbbm{1} - D_2 -(\lambda M-W)
        \end{bmatrix}
    \end{pmatrix}
\end{flalign}
\item
\begin{flalign}
    \Phi^\dagger(Y) &\geq A 
    \Leftrightarrow  
    \Tr[W] \geq 1,   \begin{pmatrix}
       0 & \frac{\nu}{2} \mathbbm{1} + V \\ \frac{\nu}{2} \mathbbm{1} + V^\dagger & (\mu_1 - \mu_2)\mathbbm{1} - D_2 - (\lambda M - W)  \end{pmatrix} \geq 0
\end{flalign}
\item
\begin{flalign}
    \Tr[BY] &= \mu_1 - \mu_2 - \nu \sqrt{1-\varepsilon^2} + \Tr[\rho D_1].
\end{flalign}
\end{enumerate}
Hence, the dual program can be written as follows:
\begin{flalign} 
    \inf_{\substack{W \geq 0\\ \mu_1, \mu_2, \nu \geq 0}} \bigg\{
        &\mu_1 - \mu_2 - \nu \sqrt{1-\varepsilon^2} + \Tr[\rho D_1]: \notag \\
        &\Tr[W] \geq 1,     
         \begin{pmatrix} 0 & \frac{\nu}{2} \mathbbm{1} + V \\ \frac{\nu}{2} \mathbbm{1} + V^\dagger & (\mu_1 - \mu_2)\mathbbm{1} - D_2 - (\lambda M - W)  \end{pmatrix} \geq 0, ~
         \begin{pmatrix}  D_1 & V \\ V^\dagger & D_2  \end{pmatrix} \geq 0 
    \bigg\}.
    \label{eq:dualproof}
\end{flalign}
We can simplify the program by defining $\eta:=(\mu_1-\mu_2) \in \mathbbm{R}$ and setting $\nu/2 \rightarrow \nu \geq 0$ in \eqref{eq:dualproof}, yielding
\begin{flalign}
    \label{eq:original_conditions}
    &\begin{pmatrix} 0 & \nu \mathbbm{1} + V \\ \nu \mathbbm{1} + V^\dagger & (\eta\mathbbm{1} - D_2 - (\lambda M - W)  \end{pmatrix} \geq 0, ~\begin{pmatrix}  D_1 & V \\ V^\dagger & D_2  \end{pmatrix} \geq 0. 
\end{flalign}
We can simplify this further by noting that the first condition requires $V = -\nu \id$, which implies $D_2 \leq \eta \id - (\lambda M -W)$. Setting $D_2=\eta \id -(\lambda M -W)$ and $V = -\nu\id$, one finds with the second constraint
\begin{flalign}
    \label{eq:new_conditions}
 \begin{pmatrix}  D_1 & \nu\id \\ \nu\id & \eta -(\lambda M -W) \end{pmatrix}\geq 0,
\end{flalign}
which holds if and only if \eqref{eq:original_conditions} holds. For invertible $D_1$ or $\nu=0$, this is implied by the Schur complement, and in case $D_1$ is singular and $\nu>0$, neither \eqref{eq:original_conditions} nor \eqref{eq:new_conditions} can be fulfilled. Hence, we can write the final program using \eqref{eq:new_conditions} and renaming $D_1 \rightarrow D$ as:
\begin{flalign}
        \inf_{\substack{W, D \geq 0\\ \eta \in \mathbbm{R}, \nu \geq 0}} \left\{ 
        \eta - 2\nu \sqrt{1-\varepsilon^2} + \Tr[\rho D]: ~
        \Tr[W] \geq 1,     ~
         \begin{pmatrix}  D & \nu\mathbbm{1} \\ 
         \nu \mathbbm{1} & \eta \mathbbm{1}- (\lambda M -W)\end{pmatrix} \geq 0 
    \right\}.
\end{flalign}
Finally, we conclude strong duality for the primal and dual problem. Set $\tilde{\rho}_{AB} = \rho_{AB}$, $\mu$ such that $\lambda \rho_A \otimes \rho_B - \rho_{AB} - \mu \mathbbm{1} \geq 0$, $X = \sqrt{1-\varepsilon^2}$, implying $\begin{pmatrix}
    \rho_{AB} & X \\
    X^\dagger & \tilde{\rho}_{AB}
\end{pmatrix} \geq 0$ and thus being a feasible point of the primal SDP. Let $\alpha > 1,~\nu >0,~\beta >0, ~\eta = \beta + \nu + (\lambda -1)\alpha, ~D=(\nu +\beta) \mathbbm{1}$ and $W = \alpha \mathbbm{1}$. Then, $\Tr[W] > 1,~ W > 0,~D>0$ and by \eqref{eq:Mopdef} $M = \alpha \mathbbm{1}$ and $\begin{pmatrix}
    D & \nu \mathbbm{1} \\
    \nu \mathbbm{1} & \eta \mathbbm{1} - (\lambda M - W)
\end{pmatrix} 
= 
\begin{pmatrix}
    (\nu+\beta) \mathbbm{1} & \nu \mathbbm{1} \\
    \nu \mathbbm{1} & (\nu+\beta) \mathbbm{1}
\end{pmatrix} > 0$. Consequently, this is a strictly feasible point and strong duality is implied by Slater's condition (cf. \cite{boyd_convex_2006}).
\end{proof}
\begin{prop}
    \label{thm:approxProgram}
    If for $\rho_{AB}$ and all $\tilde{\rho}_{AB} \in B^{\varepsilon}(\rho_{AB})$, $\rho_A\otimes\tilde{\rho}_B $ is positive definite, the procedure above converges to $\lambda_{min}$ of \eqref{eq:smoothMaxMutAlt}.
\end{prop}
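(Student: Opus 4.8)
The plan is to show that the scalar iterates $\lambda^i$ produced by step~2 form a monotone sequence that is bounded below by $\lambda_{min}$, hence converges, and then to identify its limit with $\lambda_{min}$ using the structure of step~3. Throughout, the full-rank hypothesis on $\rho_A\otimes\tilde\rho_B$ is what makes every step well-posed: it guarantees $\mathrm{supp}(\tilde\rho_{AB})\subseteq\mathrm{supp}(\rho_A\otimes\tilde\rho_B)=\hs$, so that $D_{max}(\tilde\rho_{AB}\|\rho_A\otimes\tilde\rho_B)=\log_2\lambda(\tilde\rho_{AB})$ is finite for every $\tilde\rho_{AB}\in B^\varepsilon(\rho_{AB})$, where I abbreviate $\lambda(\tilde\rho_{AB}):=\inf\{\lambda\ge0:\lambda\rho_A\otimes\tilde\rho_B-\tilde\rho_{AB}\ge0\}$, the value returned by step~2 when its input state is $\tilde\rho_{AB}$.

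First I would record two basic inequalities. Every iterate $\rho_{AB}^i$ lies in the fixed ball $B^\varepsilon(\rho_{AB})$ --- this holds for $i=1$ since $P(\rho_{AB},\rho_{AB})=0$, and for $i+1$ by the construction in step~4 --- so step~2 yields $\lambda^i=\lambda(\rho_{AB}^i)\ge\inf_{\tilde\rho_{AB}\in B^\varepsilon(\rho_{AB})}\lambda(\tilde\rho_{AB})=\lambda_{min}$. For monotonicity, the maximiser of step~3 satisfies $\lambda^i\,\rho_A\otimes\rho_B^{i+1}-\rho_{AB}^{i+1}\ge\mu^i\id$; subtracting $\mu^i\,\rho_A\otimes\rho_B^{i+1}$ and using $\rho_A\otimes\rho_B^{i+1}\le\id$ shows that $\lambda=\lambda^i-\mu^i$ is feasible for step~2 of the next iteration, whence $\lambda^{i+1}\le\lambda^i-\mu^i\le\lambda^i$. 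Thus $(\lambda^i)$ is non-increasing and bounded below, so it converges to some $\lambda^\ast\ge\lambda_{min}$; telescoping $\mu^i\le\lambda^i-\lambda^{i+1}$ gives $\sum_i\mu^i\le\lambda^1-\lambda^\ast<\infty$, so $\mu^i\to0$.

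The decisive step is to prove $\lambda^\ast=\lambda_{min}$. I would introduce $g(\lambda,\tilde\rho_{AB})$, the least eigenvalue of $\lambda\rho_A\otimes\tilde\rho_B-\tilde\rho_{AB}$, so that step~3 computes $\mu^i=h(\lambda^i)$ with $h(\lambda):=\max_{\tilde\rho_{AB}\in B^\varepsilon(\rho_{AB})}g(\lambda,\tilde\rho_{AB})$. Since $B^\varepsilon(\rho_{AB})$ is compact and $g$ is jointly continuous, $h$ is continuous, and passing to the limit in $\mu^i=h(\lambda^i)$ gives $h(\lambda^\ast)=0$. The key algebraic observation is that, because $\rho_A\otimes\tilde\rho_B$ is positive definite, $\lambda\mapsto g(\lambda,\tilde\rho_{AB})$ is strictly increasing and vanishes precisely at $\lambda=\lambda(\tilde\rho_{AB})$; hence $g(\lambda^\ast,\tilde\rho_{AB})\le0$ is equivalent to $\lambda^\ast\le\lambda(\tilde\rho_{AB})$. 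From $h(\lambda^\ast)=0$ this holds for every $\tilde\rho_{AB}\in B^\varepsilon(\rho_{AB})$, so $\lambda^\ast\le\inf_{\tilde\rho_{AB}}\lambda(\tilde\rho_{AB})=\lambda_{min}$, and with $\lambda^\ast\ge\lambda_{min}$ I conclude $\lambda^\ast=\lambda_{min}$. The same computation covers exact termination: if $\mu^i=0$ at a finite step, then $h(\lambda^i)=0$ forces $\lambda^i=\lambda_{min}$ immediately.

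I expect the main obstacle to be this identification of the limit rather than the monotone-convergence bookkeeping: one must rule out that the alternating scheme stalls at a suboptimal $\lambda^\ast$, and this is exactly where the full-rank assumption enters twice --- once to keep the max-relative entropy finite and the value function $h$ continuous, and once, via strict monotonicity of the least-eigenvalue profile in $\lambda$, to convert the terminal condition $h(\lambda^\ast)=0$ into the global bound $\lambda^\ast\le\lambda_{min}$. A secondary technical point worth checking carefully is the continuity of $h$ together with the uniform positive lower bound on the least eigenvalue of $\rho_A\otimes\tilde\rho_B$ over the compact ball, which keeps $\lambda(\tilde\rho_{AB})$ continuous and the infimum defining $\lambda_{min}$ attained.
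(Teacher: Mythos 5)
Your proof is correct, but it takes a genuinely different route from the paper's. The paper argues by contraposition at a fixed iteration: if $\lambda^i > \lambda_{min}$, the full-rank hypothesis upgrades the feasibility statement $\hat{\lambda}\,\rho_A\otimes\hat{\rho}_B-\hat{\rho}_{AB}\geq 0$ (for some $\hat{\lambda}=\lambda^i-\hat{\delta}$, $\hat{\delta}>0$) to strict positivity $\lambda^i\rho_A\otimes\hat{\rho}_B-\hat{\rho}_{AB}>0$, so step 3 admits a feasible $\mu^i>0$ and the next iterate satisfies $\lambda^{i+1}<\lambda^i$; hence termination with $\mu^i=0$ certifies $\lambda^i=\lambda_{min}$, and for convergence the paper then cites Proposition 2.3 of \cite{konno_cutting_1976} on mountain-climbing algorithms (invoking Proposition~\ref{thm:strongduality} to ensure the SDP finds a feasible point whenever one exists). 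You use the rank hypothesis for the same essential purpose --- converting non-strict feasibility into strict eigenvalue statements, packaged in your case as strict monotonicity of $\lambda\mapsto g(\lambda,\tilde{\rho}_{AB})$ --- but your convergence analysis is self-contained: the quantitative decrease $\lambda^{i+1}\leq\lambda^i-\mu^i$ (via the clean observation $\rho_A\otimes\rho_B^{i+1}\leq\id$, which the paper does not have), monotone convergence, $\mu^i\to 0$, continuity of the value function $h$, and the identification $h(\lambda^\ast)=0\Rightarrow\lambda^\ast\leq\lambda_{min}$. What each approach buys: the paper's citation claims convergence in finitely many steps, which is stronger, but it rests on transplanting Konno's result (formulated for bilinear programs over polyhedral sets) to a spectrahedral setting without justification; your argument yields only asymptotic convergence, but that is exactly what the proposition asserts, it needs no external convergence theorem, and it covers the case where the algorithm never terminates exactly, which the paper's finite-termination framing glosses over. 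Two points you should make explicit to close your write-up: the supremum in step 3 is attained (so the maximizer $\rho^{i+1}_{AB}$ exists) because $B^\varepsilon(\rho_{AB})$ is compact and the constraints are continuous; and $\mu^i=h(\lambda^i)$ rather than $\max\{0,h(\lambda^i)\}$ because $g(\lambda^i,\rho^i_{AB})\geq 0$ forces $h(\lambda^i)\geq 0$.
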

\begin{proof}
    Let $\lambda^i$ be as in \eqref{eq:sdpStep2}. Assume that $\lambda^i > \lambda_{min}$ as in \eqref{eq:smoothMaxMutAlt}. Then, there exists a $\hat{\rho}_{AB}$ and $\hat{\lambda} = \lambda^i - \hat{\delta}$ with $\hat{\delta} >0$ and $\hat{\lambda}\rho_A \otimes \hat{\rho}_B - \hat{\rho}_{AB} \geq 0$. This implies $\lambda^i \rho_A \otimes \hat{\rho}_B - \hat{\rho}_{AB} - \hat{\delta}\rho_A \otimes \hat{\rho}_B \geq 0$ and consequently, because $\hat{\delta}> 0$ and by assumption $\rho_A \otimes \hat{\rho}_B > 0$, we must have $\lambda^i \rho_A \otimes \hat{\rho}_B - \hat{\rho}_{AB} >0$. This indicates that there exist $\hat{\rho}_{AB} \in B^{\varepsilon}(\rho_{AB})$ and $\hat{\mu}>0$ such that $\lambda^i \rho_A \otimes \hat{\rho}_B - \hat{\rho}_{AB} - \hat{\mu}\id \geq 0$. This implies that the SDP \eqref{eq:sdpStep3} has a feasible solution $(\mu^i, \rho^{i+1}_{AB})$ and consequently $\lambda^i \rho_A \otimes \rho^{i+1}_B - \rho^{i+1}_{AB} > 0$. Thus, using $\rho^{i+1}_{AB}$ for the next iteration in \eqref{eq:sdpStep2}, will return a $\lambda^{i+1} < \lambda^i$. In summary, we have shown that if $\lambda^i > \lambda^{min}$, then there exists a feasible solution $\mu^i$ of the SDP \eqref{eq:sdpStep3}, provided that all states have full rank. This solution implies a feasible improved solution $\lambda^{i+1}<\lambda^i$ of \eqref{eq:sdpStep2} for the next iteration. Due to the strong duality of the SDP involved (see Proposition~\ref{thm:strongduality}), if it exists, a feasible solution will be found by the program. Failing to do so in step 3 in Algorithm \ref{thm:algorithm} consequently implies that no feasible $\mu^i > 0$ exists and therefore $\lambda^i = \lambda_{min}$. Proposition 2.3. in \cite{konno_cutting_1976} implies that this seesaw/mountain-climbing-type algorithm converges to the optimal solution after a finite number of steps.
\end{proof} \noindent
Before concluding, we briefly discuss the role of the rank constraint $\rho_A \otimes \hat{\rho}_B > 0$: If this constraint is violated, then $\lambda^i > \lambda_{min}$ does not necessarily imply $\lambda^i \rho_A \otimes \hat{\rho}_B - \hat{\rho}_{AB} >0$. Instead, $\lambda^i \rho_A \otimes \hat{\rho}_B - \hat{\rho}_{AB}$ and $\rho_A \otimes \hat{\rho}_B$ may have nontrivial intersecting kernels, such that $\lambda^i \rho_A \otimes \hat{\rho}_B - \hat{\rho}_{AB} - \hat{\delta}\rho_A \otimes \hat{\rho}_B \geq 0$ holds for some $\hat{\delta}>0$, but $\lambda^i \rho_A \otimes \hat{\rho}_B - \hat{\rho}_{AB} - \mu\id < 0$ for all $\mu > 0$. In this case, the SDP in \eqref{eq:sdpStep3} results in $\mu=0$ and the algorithm stops without necessarily having found the optimal solution $\lambda_{min}$.
\section{Conclusion}
\label{sec:conclusion}
In this work, we developed an SDP-based iterative algorithm to compute the smooth max-mutual information $I^\varepsilon_{\max}(\rho_{AB})$ of bipartite quantum states. The algorithm was proven to be accurate under the assumption that for all states $\tilde{\rho}_{AB}$ in the smoothing environment around $\rho_{AB}$, the marginal states $\rho_A \otimes \tilde{\rho}_B$ have full rank. 
\\
A key contribution of this work is the characterization of a previously unexplored semidefinite program, which forms a core component of the iterative procedure. We established its primal and dual formulations and proved strong duality, thereby laying a rigorous foundation for its use in quantum information theory.
Our results extend the landscape of SDP-based techniques for evaluating operational quantities, complementing existing methods as used for the computation of, e.g., the hypothesis testing mutual information \cite{dupuis_generalized_2012, datta_second-order_2016}, or the conditional smooth min- and max-entropy \cite{nuradha_fidelity-based_2024}. Notably, since the smooth max-mutual information can be combined with the hypothesis testing mutual information to lower-bound the one-shot distillable key (cf. Ref.~ \cite{khatri_principles_2024}), our algorithm offers a tool for estimating cryptographic capacities of quantum states that satisfy the rank constraints. 
\\
If the constraints are not fulfilled, the algorithm may not yield optimal solutions. Future work may focus on relaxing the rank assumptions, which would broaden the applicability of our method to general quantum states. In Ref.~\cite{popp_local_2025}, it is shown that the smooth max-mutual information of $\rho_{AB}$ is invariant under local unitary transformations $U_A$ acting on $A$ and general local isometric transformations $V_{B' \rightarrow B}$, mapping a smaller system $B'$ to $B$. If for a state $\rho_{AB}$ the rank condition is violated, but there exists a state $\hat{\rho}_{AB'}$ with $\rho_{AB} = U_A\otimes V_{B' \rightarrow B}~\hat{\rho}_{AB'}~U_A^\dagger \otimes V_{B' \rightarrow B}^\dagger$ that satisfies the condition, then the smooth max-mutual information can be computed with the presented algorithm for $\hat{\rho}_{AB'}$, instead. One potential way to lift the rank conditions is to analyze for which states such a pre-image under local transformations exists.
Moreover, the fully characterized SDP presented in Proposition~\ref{thm:strongduality} may find utility beyond this context, potentially serving as a building block for evaluating other information-theoretic quantities or optimization problems in quantum theory and related fields.
\printbibliography

\section*{Author Contributions} Conceptualization, C.P.; validation, C.P., T.C.S. and  B.C.H.; formal analysis, C.P.; writing---original draft preparation, C.P.; writing---review and editing, C.P., T.C.S. and B.C.H.   All authors have read and agreed to the published version of the manuscript.

\section*{Acknowledgments} This research was funded in whole, or in part, by the Austrian Science Fund (FWF) [10.55776/P36102]. For the purpose of open access, the author has applied a CC BY public copyright license to any Author Accepted Manuscript version arising from this submission.

\end{document}